\newcommand{\nat}{\mathbb{N}}
\newcommand{\even}{{\sf even}}
\newcommand{\odd}{{\sf odd}}
\newcommand{\spir}{{\sf spir}}
\newcommand{\fib}{{\sf fib}}
\newtheorem{theorem}{Theorem}
\newtheorem{definition}{Definition}
\newenvironment{proof}{\noindent{\bf Proof:}}{$\Box$

\vspace{3 mm}}
\newcounter{opnr}
\newcommand{\examp}{

\vspace{2mm}

\addtocounter{opnr}{1} \noindent {\bf Example \theopnr.
}}
\title{Equality of morphic sequences}
\author{Hans Zantema\\
Department of Computer Science, TU Eindhoven\\
 P.O. Box 513,
5600 MB Eindhoven, the Netherlands \and
Radboud University Nijmegen\\
 P.O. Box 9010, 6500 GL Nijmegen, the Netherlands\\
{\tt h.zantema@tue.nl}}
\begin{document}

\maketitle

\begin{abstract}
Morphic sequences form a natural class of infinite sequences, typically defined as the coding of a fixed point of a morphism.
Different morphisms and codings may yield the same morphic sequence. This paper investigates how to prove that two such representations of a morphic sequence by morphisms represent the same sequence. In particular, we focus on the smallest representations of the subsequences of the binary Fibonacci sequence obtained by only taking the even or odd elements. The proofs we give are induction proofs of several properties simultaneously, and are typically found fully automatically by a tool that we developed.
\end{abstract}

\section{Introduction}
The simplest class of infinite sequences over a finite alphabet are {\em periodic}: sequences of the shape $u^\infty$ for some finite non-empty word $u$.
The one-but-simplest are {\em ultimately periodic}: sequences of the shape $v u^\infty$ for some finite non-empty words $u,v$. But these are still boring in some sense. More interesting are well-structured sequences that are simple to define, but are not ultimately periodic.
One class of such sequences are {\em morphic sequences}, being the topic of this paper. As a basic example consider the morphism $f$ replacing 0 by the word $01$ and replacing 1 by the symbol 0. Then we obtain the following sequence of words:
\[ f(0) = 01,\]
\[ f^2(0) = f(f(0)) = f(01) = 010,\]
\[ f^3(0) = f(f^2(0)) = f(010) = 01001,\]
\[ f^4(0) = f(f^3(0)) = f(01001) = 01001010,\]
and so on. We observe that in this sequence of words for every word its predecessor is a prefix, so we can take the limit of the sequence of words, being the binary {\em Fibonacci sequence} $\fib$, appearing as {\tt https://oeis.org/A003849} in the online encyclopedia of integer sequences. This is a typical example of a (pure) morphic sequence. In general, a {\em pure morphic sequence} over a finite alphabet $\Gamma$ is of the shape $f^\infty(a)$ for a finite alphabet $\Gamma$ and
 $f : \Gamma \to \Gamma^+$, $a \in \Gamma$ and $f(a) = au$, $u \in \Gamma^+$. Here $f^\infty(a)$ is defined as the limit of $f^n(a)$, which is well-defined as for every $n$ the word $f^n(a) = a u f(u) \cdots f^{n-1}(u)$ is a prefix of $f^{n+1}(a) = a u f(u) \cdots f^{n}(u)$. So
\[ f^\infty(a) \; = \; a u f(u) f^2(u) f^3(u) \cdots.\]
If we have a (typically smaller) finite alphabet $\Sigma$, and a {\em coding} $\tau : \Gamma \to \Sigma$, then a {\em morphic sequence} over the alphabet $\Sigma$ is defined to be of the shape $\tau(\sigma)$. Clearly any pure morphic sequence is morphic by choosing $\Sigma = \Gamma$ and $\tau$ to be the identity. Some more background on morphic sequences is given in Section \ref{secmorph}. Basic properties of morphic sequences are extensively described in the seminal book \cite{AS03}.

The class of morphic sequences is closed under several operations, like applying morphisms on them. Another operation under which morphisms are closed is $\even$, taking the even elements of a sequence, so
\[ \even(\sigma) \; = \; \sigma(0) \sigma(2) \sigma(4) \sigma(6) \cdots ,\]
formally defined by $\even(\sigma)(i) = \sigma(2i)$ for all $i \geq 0$. It follows from Theorem 7.9.1 in \cite{AS03} that $\even$ applied on a morphic sequence is morphic again. This theorem is quite deep and does not give a simple construction. In particular, the sequence $\even(\fib)$, being A339824 in the OEIS, is morphic. So it is a natural question to find a representation of $\even(\fib)$ of the shape $\tau(f^\infty(a))$ for a morphism $f$ as simple as possible. This may be done in several ways. Some ways are correct by construction. Others apply a brute force approach: let a computer program search for a representation $\tau(f^\infty(a))$ for which the alphabet of $f$ has $n$ elements, the length of $f(a)$ is $\leq k$ for all symbols $a$, and the first $N$ elements of $\even(\fib)$ and $\tau(f^\infty(a))$ coincide for some big number $N$. For $n = 5$ and $k = 2$ this gave rise to a representation for which we conjectured in the extended version  of \cite{Z24a} that indeed this represents $\even(\fib)$. Trying to solve this conjecture was the starting point of the current research. We wanted to prove that these different representations proposed for $\even(\fib)$ indeed represent the same sequence. By the approach presented in this paper this succeeded, even fully automatically, and also similarly for $\odd(\fib)$, being A339825 in the OEIS. In Section \ref{secfib}, all details will be presented.

However, the approach applies much more general than only for this particular example. In general, when trying to prove
$\tau(f^\infty(0)) = \rho(g^\infty(0))$ (without loss of generality we may write 0 for the start symbol for both $f$ and $g$) for morphisms $f$ and $g$ and codings $\tau$ and $\rho$, then first we scale up $f$ and $g$ until they have the same {\em dominant eigenvalue}, to be defined in Section \ref{secprel} and explained in Section \ref{secscu}. Then we search for prefixes $u$ of $f^\infty(0)$ and $v$ of $g^\infty(0)$
such that $\tau(f^n(u)) = \rho(g^n(v))$ for all $n$, to be proved by induction on $n$. From this property the goal $\tau(f^\infty(0)) = \rho(g^\infty(0))$ is easily concluded. The induction proof needs some {\em induction loading}: in fact for some $k$ we find $u_0,\ldots,u_{k-1}$,
$v_0,\ldots,v_{k-1}$ for which $u = u_0$ and $v = v_0$ and for which we prove by induction on $n$ that $\tau(f^n(u_i)) = \rho(g^n(v_i))$
for $i = 0,\ldots,k-1$. Surprisingly, we find a general pattern for which these $u_0,\ldots,u_{k-1}$, $v_0,\ldots,v_{k-1}$ may be easily found automatically, and also the induction proof is easily found fully automatically in many cases. We give a typical example.

\examp
\label{ex1}
We prove that $\fib$ (as defined before) equals $\rho(g^\infty(0))$ for $g, \rho$ defined by
$g(0) = 02$, $g(1) = 021$, $g(2) = 102$, $\rho(0) = \rho(1) = 0$, $\rho(2) = 1$. It turns out that here scaling up $f$ is required, that is, $f$ is replaced by $f^2$, by which $\fib = f^\infty(0)$ is not changed. By entering an input file only containing the original definition of $f$
($f(0) = 01, f(1) = 1$), $\tau(0) = 0$, $\tau(1) = 1$, and the just given definitions of $g, \rho$, then our tool immediately produces the following proof in LaTeX format:

\vspace{3mm}

\noindent Replace $f$ by $f^2$:
$f(0) = 010, f(1) = 01$.

\noindent Claim to be proved: $\tau(f^\infty(0)) = \rho(g^\infty(0))$.

\noindent We will prove the following 2 properties simultaneously by induction on $n$.

(0) $\tau(f^n(01)) = \rho(g^n(02))$.

(1) $\tau(f^n(0)) = \rho(g^n(1))$.

\noindent Then our claim follows from (0).

\noindent Basis $n=0$ of induction:

$\tau(f^0(01)) = 01 = \rho(g^0(02))$.

$\tau(f^0(0)) = 0 = \rho(g^0(1))$.

\noindent Basis of induction proved.

\noindent Induction step part (0):

$\tau(f^{n+1}(01)) = \tau(f^n(f(01))) = \tau(f^n(01001)) = $

$\tau(f^n(01)) \tau(f^n(0)) \tau(f^n(01)) =$   (by induction hypothesis)

$\rho(g^n(02)) \rho(g^n(1)) \rho(g^n(02)) = $

$\rho(g^n(02102)) = \rho(g^n(g(02))) = \rho(g^{n+1}(02)).$

\noindent Induction step part (1):

$\tau(f^{n+1}(0)) = \tau(f^n(f(0))) = \tau(f^n(010)) = $

$\tau(f^n(01)) \tau(f^n(0)) =$   (by induction hypothesis)

$\rho(g^n(02)) \rho(g^n(1)) = $

$\rho(g^n(021)) = \rho(g^n(g(1))) = \rho(g^{n+1}(1)).$

\noindent Induction step proved, hence claim proved.


\vspace{3mm}

This ends the generated proof. Note that this proof is elementary and very easy to check: it does not use anything else than the given definitions of $f, \tau, g, \rho$ and the principle of induction. This also holds for larger generated proofs. Then they typically contain quite some book keeping, but as being generated by a computer program, without the risk of making errors as in human generated proofs.

This paper is organized as follows. In Section \ref{secmorph} we give some general background of morphic sequences. In Section \ref{secprel} we give some notations and preliminaries. In Section \ref{secmain} we present the main theorem, and a special case that is easier to apply. In Section \ref{secscu} we describe scaling up: replacing $f$ and/or $g$ by some power in order to obtain the same dominant eigenvalue. In Section \ref{sectool} we describe our tool applying our main theorem, if necessary preceded by scaling up.
In Section \ref{seclim} we describe some limitations of the approach. In Section \ref{secfib} we present the proofs for $\even(\fib)$ and
$\odd(\fib)$ that were the starting point of this research. We conclude in Section \ref{secconcl}.

\section{Morphic sequences}
\label{secmorph}

In this section we give some more background on morphic sequences.

A typical example of a morphic sequence that is not pure morphic is
\[ \spir \; = \; 1101001000100001\cdots ,\]
consisting of infinitely many ones and for which the numbers of zeros in between two successive ones is $0,1,2,3,\ldots$, respectively.
Choosing $\Gamma = \{0,1,2\}$ and $f(2) = 21, f(1) = 01, f(0) = 0$, we obtain $f^2(2) = 2101$, $f^3(2) = 2101001$ and so on, yielding
\[ f^\infty(2) \; = \; 2101001000100001\cdots ,\]
for which indeed by choosing $\Sigma = \{0,1\}$, $\tau(0) = 0, \tau(1)= \tau(2) = 1$ we obtain $\spir = \tau(f^\infty(2))$, showing that
$\spir$ is morphic. But $\spir$ is not pure morphic. If it was then we have $\spir = f^\infty(1)$ for $f$ satisfying $f(1) = 1u$, $f(0) = v$, yielding contradictions for all cases: $u$ should contain a $0$, but no $1$ (otherwise $f^\infty(1)$ would contain a pattern $10^k1$ infinitely often for a fixed $k$), and $v$ should contain a $1$ (otherwise $f^\infty(1)$ would contain only a single $1$), but as $f^\infty(1)$ purely consists of $f(0) = v$ and $f(1) = 1u$ it cannot contain unbounded groups of zeros.

If you see this definition of morphic sequences for the first time, it looks quite ad hoc. However, there are several reasons to consider this class of morphic sequences as a natural class of sequences. One of them is the fact that there are several equivalent characterizations, see \cite{Z24a}. Another one is the observation that the class of morphic sequences is closed under several kinds of operations, as mentioned in the introduction.  A general result of this shape states that if applying a {\em finite state transducer} on a morphic sequence yields an infinite sequence, then it is always morphic too. Results like these are not easy, and can be found in \cite{AS03}, Corollary 7.7.5 and Theorem 7.9.1. An interesting hierarchy of (morphic) sequences is presented in \cite{EHK11}.

Another more esthetic argument to consider morphic sequences is that they give rise to interesting {\em turtle graphics}, \cite{Z16}.
Having a sequence $\sigma$ over a finite alphabet $\Sigma$, one chooses an angle $\phi(b)$ for every $b \in \Sigma$. Next a picture is drawn in the following way: choose an actual angle that is initialized in some way, and next proceed as follows: for $i = 0,1,2,3,\ldots$ the actual angle is turned by $\phi(\sigma(i))$ and after every turn a unit segment is drawn in the direction of the actual angle. For every segment its end point is used a the starting point for the next segment. In this way infinitely many segments are drawn. The resulting figure is called a {\em turtle figure}.

In the research paper \cite{Z16} and the book \cite{Z24} written for a wider audience, turtle figures are investigated for several morphic sequences. Sometimes this yields finite pictures, that is, after a finite but typically very big number of steps only segments will be drawn that have been drawn before by which the picture will not change any more.
In other cases the turtle figure will be {\em fractal}.

\begin{center}
\includegraphics[width=120mm]{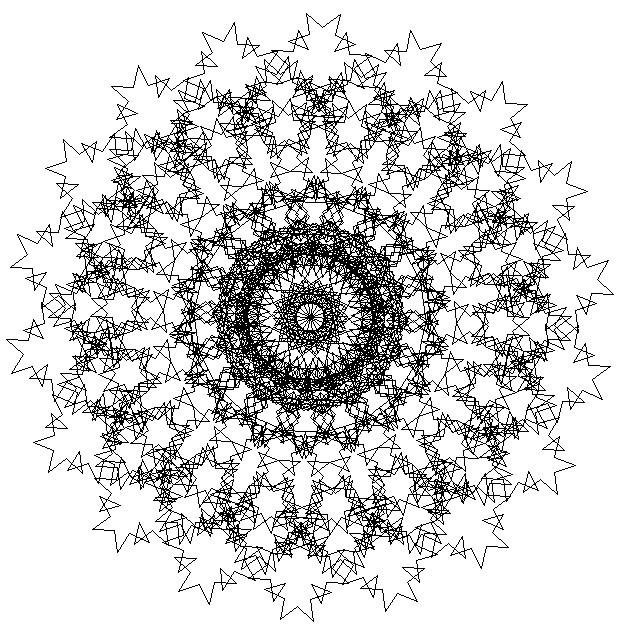}
\end{center}

To give the flavor of these figures, we just showed the finite turtle figure for the very simple pure morphic (and even 2-automatic) sequence $f^\infty(0) = 010001010100\cdots$ for $f(0) = 01$, $f(1) = 00$, sometimes called the {\em period doubling sequence}, and angles $\phi(0) = 140^o$ and $\phi(1) = -80^o$, so for every 0 the angle turns 140 degrees $= \frac{7 \pi}{9}$ to the left and for every 1 the angle turns 80 degrees $= \frac{4 \pi}{9}$ to the right, while after every turn a unit segment is drawn.

A simple program of only a few lines that draws this particular turtle figure shows that after 6000 steps not yet the full figure is drawn, but in \cite{Z24} it is shown that after $9216 = 2^{10}\times 3^2$ steps, only segments will be drawn that were drawn before, so the full picture will be equal to the picture drawn by only this finite part, and that is the picture we show here. Other morphic sequences (many of which not being automatic) and other angles give rise to a wide range of remarkable turtle figures, many of which are shown in \cite{Z16,Z24}. Turtle figures can be made for all kinds of sequences, but morphic sequences show up a nice balance between the boring and very regular figures for ultimately periodic sequences on the one hand, and the complete chaos of random sequences on the other hand. The notation for the sequence $\spir$ we saw is motivated by the fact that it gives rise to spiral shaped turtle figures.

Equational specifications of morphic sequences play a crucial role in earlier work, in particular \cite{Z09,ZR10} on investigating
when a specification has a unique solution, and in \cite{ZE11} focussing on proving equality automatically. Where \cite{ZE11} applies for a much more general setting than the current paper, it definitely fails for the cases presented in this paper.

\section{Notations and preliminaries}
\label{secprel}

For a {\em word} $w \in \Sigma^*$ over an alphabet $\Sigma$ we write $|w|$ for the length of $w$. The set of non-empty words over $\Sigma$ is denoted by $\Sigma^+$. The empty word of length 0 is denoted by $\epsilon$.

We write $\nat$ for the set of natural numbers, including $0$.

A {\em sequence} $\sigma$ over an alphabet $\Sigma$ can be seen as a mapping $\sigma : \nat \to \Sigma$, so
\[ \sigma \; = \; \sigma(0) \sigma(1) \sigma(2) \cdots .\]
So by sequence we always mean an infinite sequence. In some texts they are called {\em streams}.
The set of sequences over $\Sigma$ is denoted by $\Sigma^\infty$.

Morphisms $f : \Sigma \to \Sigma^+$ can also be applied on words and sequences, for instance, $f(\sigma)$ is the concatenation of the following words:
\[ f(\sigma) \; = \; f(\sigma(0))f(\sigma(1))f(\sigma(2)) \cdots.\]

From the introduction we recall:
\begin{definition}
A {\em pure morphic sequence} over a finite alphabet $\Gamma$ is of the shape $f^\infty(a)$ for a finite alphabet $\Gamma$ and
 $f : \Gamma \to \Gamma^+$, $a \in \Gamma$ and $f(a) = au$, $u \in \Gamma^+$.

A {\em morphic sequence} over a finite alphabet $\Sigma$ is of the shape $\tau(\sigma)$ for some coding $\tau : \Gamma \to \Sigma$ and some pure morphic sequence $\sigma$ over some finite alphabet $\Gamma$.
\end{definition}

Here $f^\infty(a)$ is defined as the limit of $f^n(a)$, which is well-defined as for every $n$ the word $f^n(a) = a u f(u) \cdots f^{n-1}(u)$ is a prefix of $f^{n+1}(a) = a u f(u) \cdots f^{n}(u)$. So
\[ f^\infty(a) \; = \; a u f(u) f^2(u) f^3(u) \cdots.\]
It is easy to see that it is the unique {\em fixed point} of $f$ starting in $a$, that is, $f(f^\infty(a)) = f^\infty(a)$.
Elements not occurring in $f^\infty(a)$ may be ignored and removed, so without loss of generality we will assume that every element of $\Gamma$ occurs in $f^\infty(a)$.

Without loss of generality we may (and often will) assume that $\Gamma = \{ 0,1,\ldots,n-1\}$ and $a = 0$.

In the above notation the coding $\tau : \Gamma \to \Sigma$ is lifted to $\tau : \Gamma^\infty \to \Sigma^\infty$ in which $\tau$ is applied on all separate elements.

A slightly more relaxed definition of morphic sequence also allows $f(b) = \epsilon$ for some elements $b \in \Gamma$. In \cite{AS03}, Theorem 7.5.1, it is proved that this is equivalent to our definition, so we may and will always assume that $f(b) \in \Gamma^+$ for all $b \in \Gamma$.

For integers $0 \leq k \leq m$ for any sequence $\sigma$ over $\Sigma$ we write $\sigma_{k,m} = \sigma(k) \sigma(k+1) \cdots \sigma(m-1) \in \Sigma^+$, being a word of length $m-k$, so is empty if $m=k$. A word of the shape $\sigma_{0,m}$ is called a {\em prefix} of $\sigma$.

To a morphism $f : \Gamma \to \Gamma^+$ for $\Gamma =  \{ 0,1,\ldots,n-1\}$ an $n \times n$ matrix $F$ over the natural numbers is associated in which for every $i,j$ the number $F_{ij}$ is the number of occurrences of $i$ in the word $f(j)$. It is easily checked that for every $k \geq 1$ the matrix associated to $f^k$ is $F^k$. Under mild conditions for such a matrix over natural numbers there is always a positive real eigenvalue such that the absolute value of every other eigenvalue is less or equal to this particular real one.
This is called the {\em dominant eigenvalue}, see e.g., \cite{D98,R14a,R14b}, and coincides with $\lim_{k \to \infty} \frac{|f^{k+1}(u)|}{|f^k(u)|}$ for a suitable initial word $u$. As we assumed that every element of $\Gamma$ occurs in $f^\infty(0)$, any word $u$ containing $0$ will do, in particular $u = 0$. The approach to approximate the dominant eigenvalue by $\frac{|f^{k+1}(u)|}{|f^k(u)|}$ for some large enough number $k$ is often called the {\em power method}.

A morphism $f$ and its associated matrix $F$ are called {\em primitive} if for some $k > 0$ all entries of $F^k$ are positive, that is, for every $i,j$ the symbols $j$ occurs in $f^k(i)$. For instance, the function $f$ presented for defining $\fib$ is primitive, and the function $f$ presented for defining $\spir$ is not primitive.

The following theorem is a reformulation of the main result of \cite{D98}.

\begin{theorem}
\label{thmd}
If $f,g$ are primitive and $\tau(f^\infty(0)) = \rho(g^\infty(0))$ is not periodic, then there exist $k,m$ such that $f^k$ and $g^m$ have the same dominant eigenvalue.
\end{theorem}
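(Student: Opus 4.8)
The plan is to reduce the statement to a known fact about primitive morphisms and their frequency vectors, and then to extract a compatibility of dominant eigenvalues from the fact that the two codings produce the \emph{same} non-periodic sequence. First I would recall that for a primitive morphism $f$ over $\Gamma = \{0,\ldots,n-1\}$ with associated matrix $F$, the Perron--Frobenius theorem applies: $F$ has a simple dominant (Perron) eigenvalue $\lambda_f > 1$ (it is $>1$ precisely because the sequence is not ultimately periodic), and a corresponding strictly positive right eigenvector, normalised so that its entries sum to $1$, gives the vector of \emph{letter frequencies} of the fixed point $f^\infty(0)$. The key classical input, which is exactly the content of \cite{D98} being reformulated, is that a primitive morphic sequence has well-defined letter frequencies, and more refined: well-defined frequencies of all factors, with the frequency of a letter on a block of length $N$ converging at a rate governed by $\lambda_f$ (and the subdominant eigenvalue). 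I would use the coarser consequence: the length $|f^k(0)|$ grows like $c\,\lambda_f^k$, so $\lambda_f = \lim_k |f^{k+1}(0)|/|f^k(0)|$ as stated in the preliminaries.

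Next I would bring in the hypothesis $\tau(f^\infty(0)) = \rho(g^\infty(0)) =: w$, a fixed non-periodic sequence over $\Sigma$. The idea is that both representations impose a ``quasi-self-similar'' block structure on $w$: cutting $w$ according to the images $\tau(f^k(b))$, $b\in\Gamma$, gives a factorisation into blocks whose lengths take finitely many values and whose relative positions recur with frequencies determined by the Perron eigenvector of $F$; similarly for $g$ and $\rho$. Because it is one and the same sequence $w$, the two families of block-lengths must be commensurable in an asymptotic sense. Concretely, I would look at the return structure: fix a letter (or a short factor) $a$ occurring in $w$, and compare the set of gaps between consecutive occurrences of $a$. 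On the $f$-side, after applying $f$ once, every such gap is replaced by a concatenation of blocks $\tau(f^k(\cdot))$ whose total length multiplies (asymptotically) by $\lambda_f$; on the $g$-side the analogous multiplication factor is $\lambda_g$. Since the sequence of gaps is intrinsic to $w$, iterating $f^k$ and $g^m$ must eventually produce the \emph{same} rescaling behaviour whenever $\lambda_f^k = \lambda_g^m$ is achievable — so the real content is to show such $k,m$ exist, i.e.\ that $\log\lambda_f$ and $\log\lambda_g$ are rationally dependent, $(\log\lambda_f)/(\log\lambda_g)\in\mathbb{Q}$.

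For that rational-dependence step I would argue as follows. The frequency vector of $w$ (as a sequence over $\Sigma$) is a single well-defined vector $\mu\in\mathbb{R}^\Sigma$, computable from the $f$-representation as the image under $\tau$ of the Perron eigenvector $x_f$ of $F$, and from the $g$-representation as the image under $\rho$ of the Perron eigenvector $x_g$ of $G$. More is true at the level of factor frequencies: for \emph{every} finite factor $z$ of $w$, its frequency is a well-defined real number, and on the $f$-side this number lies in the $\mathbb{Q}$-vector space spanned by the entries of $x_f$ together with $1$ — equivalently, in the field $\mathbb{Q}(\lambda_f)$ after suitable normalisation — while on the $g$-side it lies in $\mathbb{Q}(\lambda_g)$. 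Since $w$ is not periodic, it has factors of unbounded length whose frequencies are irrational and, in fact, generate $\mathbb{Q}(\lambda_f)$ over $\mathbb{Q}$ (this uses primitivity: the factor-frequency vectors, suitably organised, are obtained by iterating $F$ and converge to directions controlled by all the Galois conjugates of $\lambda_f$). Matching these two descriptions of the same countable set of frequencies forces $\mathbb{Q}(\lambda_f) = \mathbb{Q}(\lambda_g)$ and, comparing growth rates of factor counts or block lengths, $\lambda_f$ and $\lambda_g$ to be multiplicatively dependent, hence $\lambda_f^k = \lambda_g^m$ for some positive integers $k,m$; as the matrix of $f^k$ is $F^k$ with dominant eigenvalue $\lambda_f^k$, and similarly for $g^m$, this is the desired conclusion.

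The main obstacle I expect is precisely this last multiplicative-dependence argument: going from ``the two representations describe the same sequence'' to ``the two Perron eigenvalues are multiplicatively dependent'' is the genuinely non-trivial step, and it is where one must invoke the full strength of \cite{D98} rather than soft Perron--Frobenius theory — one needs that the \emph{rate} of convergence of factor frequencies (not merely their existence) is pinned down by the eigenvalue, so that an intrinsic quantity of $w$ determines $\lambda_f$ up to the ambiguity of taking powers. The rest (Perron--Frobenius, the identification of frequencies with eigenvector entries, and the bookkeeping of block factorisations) is routine. I would therefore structure the write-up as: (i) recall Perron--Frobenius and frequency vectors for primitive morphisms; (ii) set up the two block factorisations of $w$; (iii) quote \cite{D98} for the sharp convergence rate; (iv) deduce multiplicative dependence of $\lambda_f,\lambda_g$; (v) conclude by passing to $f^k$ and $g^m$.
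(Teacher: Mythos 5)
The paper does not actually prove this statement: Theorem \ref{thmd} is introduced as ``a reformulation of the main result of \cite{D98}'' (Durand's generalization of Cobham's theorem to primitive substitutive sequences), so the paper's own ``proof'' is a citation. The only fair comparison is therefore between your sketch and that citation.

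Your proposal has a genuine gap at exactly the point where all the content lies. Steps (i), (ii) and (v) of your plan --- Perron--Frobenius for primitive matrices, the identification of letter frequencies with the normalised positive eigenvector, the two block factorisations of $w$ induced by the representations, and the observation that the matrix of $f^k$ is $F^k$ with dominant eigenvalue $\lambda_f^k$ --- are routine and correct. But step (iv), the deduction that $\lambda_f$ and $\lambda_g$ are multiplicatively dependent, \emph{is} the theorem, and the argument you offer for it does not work. Knowing that every factor frequency of $w$ lies in both $\mathbb{Q}(\lambda_f)$ and $\mathbb{Q}(\lambda_g)$, and even concluding $\mathbb{Q}(\lambda_f)=\mathbb{Q}(\lambda_g)$, gives nothing like $\lambda_f^k=\lambda_g^m$: the integers $2$ and $3$ generate the same field and are multiplicatively independent, and ``comparing growth rates of factor counts or block lengths'' is a restatement of the goal, not an argument. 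You acknowledge this yourself when you say one must ``invoke the full strength of \cite{D98}'' at this step --- but the full strength of \cite{D98} is precisely the statement being proved, so as written the proposal is circular. A self-contained proof would need the return-word and derived-sequence machinery of Durand, in which multiplicative independence of the two Perron numbers is shown to force the common sequence to be ultimately periodic; none of that appears in the sketch. The honest version of your write-up would simply cite \cite{D98}, which is exactly what the paper does.
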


For primitive morphisms this can be seen as a generalization of Cobham's theorem, \cite{C69}, also being Theorem 11.2.1 in \cite{AS03}.
This restricts to the case where both $f$ and $g$ are {\em uniform}. Here a morphism $f$ is called uniform if it is $k$-uniform for some $k$, and it is called $k$-uniform if $|f(a)| = k$ for all symbols $a$. In that case the dominant eigenvalue of $f$ is $k$.

\section{The main theorem}
\label{secmain}

Without loss of generality any finite alphabet $\Gamma$ is written as $\Gamma = {0,\ldots,n-1}$ for $n = |\Gamma|$.

Two arbitrary morphic sequences over any alphabet $\Sigma$ may be written as $\tau(f^\infty(0))$ and $\rho(g^\infty(0))$, for
$\Gamma_f  = {0,\ldots,n_f-1}$, $\Gamma_g  = {0,\ldots,n_g-1}$, $f : \Gamma_f \to \Gamma_f^+$, $g : \Gamma_g \to \Gamma_g^+$, $\tau : \Gamma_f \to \Sigma$, $\rho : \Gamma_g \to \Sigma$, where both $f(0)$ and $g(0)$ start in 0 and have length $> 1$.

The theorem presented in this section gives a criterion that is easily checked in any particular case, and from which $\tau(f^\infty(0)) = \rho(g^\infty(0))$ can be concluded.

\begin{theorem}
\label{thmmain}
In the above setting, let $u_i,v_i$ be non-empty words over $\{0,1,\ldots,n_f-1\}$, $\{0,1,\ldots,n_g-1\}$, respectively, for $i = 0, 1, \ldots, n-1$ for some number $n > 0$. For $i = 0, 1, \ldots, n-1$ let $w_i$ be some word over $\{0, 1, \ldots, n-1\}$, write $w_i = w_{i,0} w_{i,1} \cdots w_{i,k_i-1}$ where $k_i = |w_i|$.  Assume that for $i = 0, 1, \ldots, n-1$ the following holds for $v_i = a_0 \cdots a_{k-1}$:
\begin{itemize}
\item $\tau(u_i) = \rho(v_i)$,
\item $f(u_i) = u_{w_{i,0}} u_{w_{i,1}} \cdots u_{w_{i,k_i-1}}$, and
\item $g(v_i) = v_{w_{i,0}} v_{w_{i,1}} \cdots v_{w_{i,k_i-1}}$.
\end{itemize}
Moreover, assume that both $u_0$ and $v_0$ start in 0.

Then $\tau(f^\infty(0)) = \rho(g^\infty(0))$.
\end{theorem}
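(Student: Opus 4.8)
The plan is to prove by induction on $n$ that $\tau(f^n(u_i)) = \rho(g^n(v_i))$ for all $i \in \{0,1,\ldots,n-1\}$ simultaneously, and then to conclude the statement from the case $i = 0$. (Here there is an unfortunate clash of the symbol $n$ between the alphabet-size parameter and the induction variable; I would rename the induction variable to $t$, so the claim is: for all $t$ and all $i$, $\tau(f^t(u_i)) = \rho(g^t(v_i))$.)

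First, the base case $t = 0$: this is exactly the first bulleted hypothesis $\tau(u_i) = \rho(v_i)$, since $f^0$ and $g^0$ are the identity on words. For the induction step, assume $\tau(f^t(u_i)) = \rho(g^t(v_i))$ for every $i$. Fix $i$ and write $w_i = w_{i,0} w_{i,1} \cdots w_{i,k_i-1}$. Then
\[
f^{t+1}(u_i) = f^t(f(u_i)) = f^t\bigl(u_{w_{i,0}} u_{w_{i,1}} \cdots u_{w_{i,k_i-1}}\bigr) = f^t(u_{w_{i,0}}) f^t(u_{w_{i,1}}) \cdots f^t(u_{w_{i,k_i-1}}),
\]
using the second hypothesis and the fact that $f^t$, being a morphism, distributes over concatenation. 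Applying the coding $\tau$ (which also distributes over concatenation) and then the induction hypothesis to each factor $u_{w_{i,j}}$ gives
\[
\tau(f^{t+1}(u_i)) = \prod_{j=0}^{k_i-1} \tau(f^t(u_{w_{i,j}})) = \prod_{j=0}^{k_i-1} \rho(g^t(v_{w_{i,j}})) = \rho\bigl(g^t(v_{w_{i,0}} \cdots v_{w_{i,k_i-1}})\bigr) = \rho(g^t(g(v_i))) = \rho(g^{t+1}(v_i)),
\]
where the last two equalities use the third hypothesis and the same distributivity in reverse. This completes the induction. (The list $v_i = a_0 \cdots a_{k-1}$ mentioned in the statement does not actually seem to be needed for this argument; I would drop it or check whether it is a leftover.)

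It remains to pass from the finite approximants to the infinite sequences. Since $u_0$ starts in $0$ and $f(0)$ starts in $0$ with $|f(0)| > 1$, the words $f^t(u_0)$ have strictly increasing length and each is a prefix of the next — indeed $f^t(u_0)$ is a prefix of $f^t(f(0)\cdots) $ and the standard prefix argument for $f^\infty(0)$ shows $f^\infty(0)$ is the limit of $f^t(u_0)$ — so $\tau(f^\infty(0))$ is the limit of $\tau(f^t(u_0))$. The same holds on the $g$-side for $v_0$, giving that $\rho(g^\infty(0))$ is the limit of $\rho(g^t(v_0))$. Since $\tau(f^t(u_0)) = \rho(g^t(v_0))$ for every $t$ by the induction, and both sequences of words have lengths tending to infinity, their limits coincide: $\tau(f^\infty(0)) = \rho(g^\infty(0))$.

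The only genuinely delicate point — the rest is bookkeeping — is the last step: one must be sure that $f^t(u_0)$ really does exhaust $f^\infty(0)$ as $t \to \infty$, i.e. that $u_0$ being a word starting in $0$ (not merely the letter $0$) still yields arbitrarily long prefixes of $f^\infty(0)$. This follows because $f^\infty(0)$ is the unique fixed point of $f$ beginning with $0$, so $f^t(0)$ is a prefix of $f^t(u_0)$ (as $0$ is a prefix of $u_0$), and $|f^t(0)| \to \infty$ by $|f(0)| > 1$ together with primitivity/productivity of $f$ on the reachable alphabet; hence $|f^t(u_0)| \to \infty$, and since $f^t(u_0)$ is a prefix of $f^{\infty}(0)$ for every $t$ (each such word is a prefix of $f^{t+1}(u_0)$), the limit is $f^\infty(0)$ itself. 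I would state this prefix-chain fact as a small preliminary lemma so the main induction reads cleanly.
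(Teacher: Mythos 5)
Your induction is exactly the paper's proof: base case from the hypothesis $\tau(u_i)=\rho(v_i)$, induction step by expanding $f(u_i)$ via the second hypothesis, applying the induction hypothesis to each factor $u_{w_{i,j}}$, and reassembling via the third hypothesis. (Your side remarks are also both correct: the paper uses $m$ for the induction variable to avoid the clash with $n$, and the phrase ``$v_i = a_0\cdots a_{k-1}$'' in the statement is indeed a vestigial leftover that plays no role.) The problem is in your passage to the limit. You assert that $f^t(u_0)$ is a prefix of $f^{t+1}(u_0)$ and hence that every $f^t(u_0)$ is a prefix of $f^\infty(0)$. That chain argument needs $u_0$ to be a prefix of $f(u_0)$ (equivalently, of $f^\infty(0)$), but the theorem only assumes that $u_0$ \emph{starts with the letter} $0$; nothing in the hypotheses forces $u_0$ to be a prefix of the fixed point --- the decomposition $f(u_0)=u_{w_{0,0}}u_{w_{0,1}}\cdots$ is not required to begin with $u_0$, nor with any extension of $u_0$. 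So the step ``$f^t(u_0)$ is a prefix of $f^\infty(0)$'' is unjustified as written, even though it holds in every application the tool actually produces (where $u_0$ is chosen as a prefix of $f^\infty(0)$).

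The paper's own conclusion sidesteps this. From ``$0$ is a prefix of $u_0$'' it deduces only that $\tau(f^m(0))$ is a prefix of $\tau(f^m(u_0))$, and likewise $\rho(g^m(0))$ is a prefix of $\rho(g^m(v_0))$. Since $\tau(f^m(u_0))=\rho(g^m(v_0))$, the two words $\tau(f^m(0))$ and $\rho(g^m(0))$ are prefixes of one common word, hence one is a prefix of the other; and these \emph{are} genuine prefixes of $\tau(f^\infty(0))$ and $\rho(g^\infty(0))$ respectively, with lengths tending to infinity, so the two infinite sequences agree on arbitrarily long prefixes and are equal. You already have the needed ingredient in your last paragraph ($f^t(0)$ is a prefix of $f^t(u_0)$ because $0$ is a prefix of $u_0$); you just need to run the comparison through $f^t(0)$ and $g^t(0)$ rather than through $f^t(u_0)$ itself. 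With that repair your proof coincides with the paper's.
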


\begin{proof}
We prove the following property by induction on $m$:

For every $m \geq 0$ and for every $i = 0, 1, \ldots, n-1$ we have $\tau(f^m(u_i)) = \rho(g^m(v_i))$.

The basis of the induction, for $m=0$, reads  $\tau(u_i) = \rho(v_i)$ for $i = 0, 1, \ldots, n-1$, which holds by assumption.

For the induction step we assume that for every $i = 0, 1, \ldots, n-1$ we have $\tau(f^m(u_i)) = \rho(g^m(v_i))$ for some $m$.
For $i = 0, 1, \ldots, n-1$ we observe that
\[ \begin{array}{rcll}
\tau(f^{m+1}(u_i)) &=& \tau(f^m(f(u_i))) & \\
&=& \tau(f^m(u_{w_{i,0}} u_{w_{i,1}} \cdots u_{w_{i,k_i-1}})) & \mbox{ (by assumption)} \\
&=& \rho(g^m(v_{w_{i,0}} v_{w_{i,1}} \cdots v_{w_{i,k_i-1}})) & \mbox{(by induction hypothesis, $k_i \times$)} \\
&=& \rho(g^m(g(v_i))) & \mbox{(by assumption)} \\
&=& \rho(g^{m+1}(v_i)), & \end{array}\]
concluding the induction proof.

One of the claims we have proved by induction is $\tau(f^m(u_0)) = \rho(g^m(v_0))$, for every $m \geq 0$. As $u_0$ and $v_0$ start in $0$,
for every $m \geq 0$ we have that both $\tau(f^m(0))$ and $\rho(g^m(0))$ are prefixes of  $\tau(f^m(u_0)) = \rho(g^m(v_0))$, from which we conclude  $\tau(f^\infty(0)) = \rho(g^\infty(0))$.
\end{proof}

As in Example \ref{ex1} given in the introduction choose $f(0)= 010$, $f(1)=01$, $\tau(0) = 0$, $\tau(1) = 1$, $g(0)= 02$, $g(1)=021$, $g(2) = 102$, $\rho(0) = \rho(1) = 0$, $\rho(2) = 1$. Then Theorem \ref{thmmain} applies by choosing $u_0 = 01$, $u_1 = 0$, $v_0 = 02$, $v_1 = 1$, for which the first condition $\tau(u_i) = \rho(v_i)$ is easily checked for $i=0,1$. Moreover, $f(u_0) = 01001 = u_0 u_1 u_0$,
$g(v_0) = 02102 = v_0 v_1 v_0$, $f(u_1) = 010 = u_0 u_1$ and $g(v_1) = 021 = v_0 v_1$, so by choosing $w_0 = 010$ and $w_1 = 01$, by which the remaining conditions of Theorem \ref{thmmain} have been checked, and equality $\tau(f^\infty(0)) = \rho(g^\infty(0))$ may be concluded. In fact the proof given in Example \ref{ex1} is exactly the proof of Theorem \ref{thmmain} instantiated for this particular instance.

The main challenge to apply Theorem \ref{thmmain} is to find suitable words $u_i, v_i, w_i$ for which the conditions hold. The next theorem elaborates the particular case of choosing $n = n_g$ and $v_i = w_i = g(i)$ for all $i = 0, 1, \ldots, n-1$. For this choice the last condition $g(v_i) = v_{w_{i,0}} v_{w_{i,1}} \cdots v_{w_{i,k_i-1}}$ is obtained for free. It remains to choose $u_i$ such that $\tau(u_i) = \rho(v_i)$ and $f(u_i) = u_{w_{i,0}} u_{w_{i,1}} \cdots u_{w_{i,k_i-1}}$.
Assume that $\tau(f^\infty(0)) = \rho(g^\infty(0))$ is expected to hold after checking that the first $N$ elements of $\tau(f^\infty(0))$ and $\rho(g^\infty(0))$ coincide for some big $N$. Then a natural choice is to choose $u_i$ to be the factor of $\tau(f^\infty(0))$ on the position of the first occurrence of $g(i)$ in $\rho(g^\infty(0))$. Applied to this setting, Theorem \ref{thmmain} is reformulated to the following theorem.

\begin{theorem}
\label{thmmain2}
For $i = 0, 1, \ldots, n_g-1$ let $w_i$ be the prefix in front of the first occurrence of $i$ in $g^\infty(0)$, so
$g^\infty(0)(|w_i|) = i$ and $i$ does not occur in $g^\infty(0)_{0,|w_i|}$, and write $u_i = f^\infty(0)_{|g(w_i)|, |g(w_i i)|}$.

For $i = 0, 1, \ldots, n_g-1$ assume that $\tau(u_i) = \rho(g(i))$ and $f(u_i) = u_{a_0} \cdots u_{a_{k-1}}$ for $g(i) = a_0 \cdots a_{k-1}$.

Then $\tau(f^\infty(0)) = \rho(g^\infty(0))$.
\end{theorem}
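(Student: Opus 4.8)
The plan is to derive Theorem~\ref{thmmain2} as a direct instantiation of Theorem~\ref{thmmain}, so essentially all the work is in checking that the hypotheses of the general theorem hold for the specific choices $n = n_g$, $v_i = w_i = g(i)$, and the $u_i$ as given. First I would set up the three families of words: take $w_i$ to be the prefix of $g^\infty(0)$ before the first occurrence of the symbol $i$, set the ``$v_i$'' of Theorem~\ref{thmmain} equal to $g(i)$, and likewise set the ``$w_i$'' of Theorem~\ref{thmmain} equal to $g(i)$ as well (reading $g(i) = a_0 \cdots a_{k-1}$ as a word over $\{0,\ldots,n_g-1\}$). The $u_i$ are defined to be the factor of $f^\infty(0)$ occupying the positions from $|g(w_i)|$ up to $|g(w_i i)| = |g(w_i)| + |g(i)|$, i.e.\ the block of length $|g(i)|$ sitting at the place where $g$ sends the first $i$.

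The second step is to verify the three bulleted conditions of Theorem~\ref{thmmain} for every $i$. The condition $\tau(u_i) = \rho(v_i)$ becomes $\tau(u_i) = \rho(g(i))$, which is assumed outright in the statement. The condition $f(u_i) = u_{w_{i,0}} \cdots u_{w_{i,k_i-1}}$, with $w_i = g(i) = a_0 \cdots a_{k-1}$, becomes $f(u_i) = u_{a_0} \cdots u_{a_{k-1}}$, again assumed outright. The condition $g(v_i) = v_{w_{i,0}} \cdots v_{w_{i,k_i-1}}$ becomes $g(g(i)) = g(a_0) \cdots g(a_{k-1})$, which is just the fact that $g$ is a morphism applied to the word $g(i) = a_0\cdots a_{k-1}$ — so indeed it holds ``for free'' as the text remarks. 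Finally I must check that $u_0$ and $v_0$ start in $0$: since $g(0)$ starts in $0$ (standing assumption on $g$), $v_0 = g(0)$ starts in $0$; and since $g^\infty(0)$ starts in $0$, we have $w_0 = \epsilon$, hence $u_0 = f^\infty(0)_{0,|g(0)|}$ is a prefix of $f^\infty(0)$, which starts in $0$. With all hypotheses of Theorem~\ref{thmmain} in place, its conclusion $\tau(f^\infty(0)) = \rho(g^\infty(0))$ follows immediately.

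There is one genuinely non-trivial point buried in the above that I would want to address carefully rather than wave away: the definition of $u_i$ must actually be \emph{consistent} with the indexing used by Theorem~\ref{thmmain}, in the sense that the $u_i$ really are the ``tiles'' into which $f^\infty(0)$ decomposes in lock-step with the decomposition of $g^\infty(0)$ into the symbols $i$. Concretely, the identity $f^\infty(0) = \prod_{j\ge 0} u_{g^\infty(0)(j)}$ should hold, and this is what makes the factor-extraction definition of $u_i$ coherent (each occurrence of a symbol $i$ in $g^\infty(0)$ sits at a position whose image under $g$, read inside $f^\infty(0)$, is exactly $u_i$, independently of \emph{which} occurrence we picked). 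This coherence is not needed to invoke Theorem~\ref{thmmain} — the theorem only asks for the local equations on $f$, $g$, $\tau$, $\rho$ — but it is the conceptual reason the construction is the ``right'' one, and it is worth a sentence. Thus I expect the main obstacle to be purely expository: making clear that Theorem~\ref{thmmain2} is a clean specialization, that two of its three structural conditions are automatic, and that the only real content left to the user is the two assumed equalities $\tau(u_i) = \rho(g(i))$ and $f(u_i) = u_{a_0}\cdots u_{a_{k-1}}$.
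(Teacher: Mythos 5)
Your proposal is correct and is essentially the paper's own argument: the paper presents Theorem~\ref{thmmain2} precisely as the instantiation of Theorem~\ref{thmmain} with $n = n_g$, $v_i = w_i = g(i)$, and $u_i$ the indicated factor of $f^\infty(0)$, noting that the third bulleted condition holds for free because $g$ is a morphism. Your additional checks (non-emptiness, $w_0 = \epsilon$ so that $u_0$ is a prefix of $f^\infty(0)$ starting in $0$, and $v_0 = g(0)$ starting in $0$) correctly fill in the details the paper leaves implicit.
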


%
%
%
%
Note that for applying Theorem \ref{thmmain2} for given $f,\tau,g,\rho$ all conditions can be checked directly, not depending on any further choice.

\examp Choose $f(0) = 01, f(1) = 101$, $\tau$ is the identity, hence may be omitted, and
$g(0) = 021, g(1) = 102, g(2) = 02$, $\rho(0) = 0$ and $\rho(1) = \rho(2) = 1$. It turns out that $f^\infty(0) = \rho(g^\infty(0))$. In the setting of Theorem \ref{thmmain2} we obtain $u_0 = 011$, $u_1 = 101$ and $u_2 = 01$. We check that $u_0 = \rho(g(0))$, $u_1 = \rho(g(1))$ and $u_2 = \rho(g(2))$, and $f(u_0) = 01101101 = u_0 u_2 u_1$, $f(u_1) = 10101101 = u_1 u_0 u_2$ and $f(u_2) = 01101 = u_0 u_2$, so by Theorem \ref{thmmain2} we obtain $f^\infty(0) = \rho(g^\infty(0))$.

\vspace{2mm}

So the simplest approach to apply Theorem \ref{thmmain} is via checking the requirements of its special instance Theorem \ref{thmmain2}.
This is only some book keeping, easily to be done by a computer program, and that is indeed what we did. Although this works well for quite some examples, for some examples Theorem \ref{thmmain2} shows up not to apply, while a
proof is possible by choosing words $u_i$, $v_i$, $w_i$ satisfying all conditions of Theorem \ref{thmmain}. A simple instance is the following.

\examp
Let $f,g$ be defined by $f(0) = 02, f(1) = 101, f(2) = 10$ and
$g(0) = 0210, g(1) = 1, g(2) = 10$. Let $\tau, \rho$ both being the identity, so they may be omitted. Then
$f^\infty(0) = g^\infty(0)$. However, Theorem \ref{thmmain2} fails to prove this, also if $f,g$ are swapped. But Theorem \ref{thmmain}
succeeds by choosing $n=2$, $u_0 = v_0 = 021$, $u_1 = v_1 = 01$, as $f(u_0) = 0210101 = u_0 u_1 u_1$, $g(v_0) = 0210101 = v_0 v_1 v_1$, $f(u_1) = 02101 = u_0 u_1$ and $g(v_1) = 02101 = v_0 v_1$, hence proving $f^\infty(0) = g^\infty(0)$. This proof is found by our tool.

\vspace{2mm}

In cases
where Theorem \ref{thmmain2} applies, often other instances of Theorem \ref{thmmain} apply that give simpler proofs. For instance, for Example \ref{ex1}  Theorem \ref{thmmain2} applies by proving the three properties $\tau(f^n(01)) = \rho(g^{n+1}(0))$, $\tau(f^n(010)) = \rho(g^{n+1}(1))$ and $\tau(f^n(001)) = \rho(g^{n+1}(2))$ simultaneously by induction on $n$, where the proof given in Example \ref{ex1} was simpler, only involving two properties to be proved by induction simultaneously. Our final tool applies a more sophisticated search for suitable words $u_i$, $v_i$, $w_i$ in order to apply Theorem \ref{thmmain}. This will be discussed in Section \ref{sectool}. Before this approach applies some scaling up may be required, which will be discussed now.

\section{Scaling up}
\label{secscu}
When proving $\tau(f^\infty(0)) = \rho(g^\infty(0))$ by Theorem \ref{thmmain} or its corollary Theorem \ref{thmmain2}, we find words $u_0,v_0$ starting in 0 such that for every $m \geq 0$ we have $\tau(f^m(u_0)) = \rho(g^m(v_0))$. Hence $f^m(u_0)$ and $g^m(v_0)$ have the same length for every $m \geq 0$. In particular, $\lim_{m \to \infty} \frac{|f^{m+1}(u_0)|}{^|f^m(u_0)|} = \lim_{m \to \infty} \frac{|g^{m+1}(v_0)|}{^|g^m(v_0)|}$. As both $u_0$ and $v_0$ contain the symbol 0, these limits coincide with dominant eigenvalues of $f$ and $g$. Hence Theorems \ref{thmmain} and \ref{thmmain2} only may be successful if $f$ and $g$ have the same dominant eigenvalue.

However, if $f$ and $g$ do not have the same dominant eigenvalue, it is often possible to scale up by choosing two numbers $k,m$ such that $f^k$ and $g^m$ have the same dominant eigenvalue. In fact, Theorem \ref{thmd} states that for $f,g$ being primitive and the result sequence being non-periodic, this is always possible. Note that by replacing $f$ by $f^k$ and $g$ by $g^m$, the sequences $\tau(f^\infty(0))$ and $\rho(g^\infty(0))$ do not change. For instance, $\fib = f^\infty(0)$ for $f(0) = 01, f(1) = 0$ also satisfies
$\fib = h^\infty(0)$ for $h = f^3$ defined by $h(0) = 01001, h(1) = 010$. The dominant eigenvalue of $f$ is $\phi = \frac{1+\sqrt{5}}{2}$, the dominant eigenvalue of $h = f^3$ is $\phi^3$.

In fact a very first of example of this scaling up we already saw in Example \ref{ex1}, where $f$ was replaced by $f^2$ in order to obtain the same dominant eigenvalues for $f$ and $g$. Next we give an example in which both $f$ and $g$ have to be scaled in order to obtain the same dominant eigenvalues.

\examp

Two other representations of $\fib$ are $\tau(f^\infty(0))$ and $\rho(g^\infty(0))$ for $\tau,\rho,f,g$ defined by
\[ \tau(0) = \tau(1) = 0, \; \tau(2) = 1, \; \rho = \tau,\]
\[f(0) = 0210, \; f(1) = 02102, \; f(2) = 2021,\]
\[g(0) = 02, \; g(1) = 021, \; g(2) = 102.\]
Here the dominant eigenvalue of $f$ is $\phi^3$ and the dominant eigenvalue of $g$ is $\phi^2$. So the only hope to apply
Theorem \ref{thmmain} or Theorem \ref{thmmain2} is by replacing $f$ by $f^2$ and $g$ by $g^3$, by which both get dominant eigenvalue
$\phi^6$. Keeping the same names $f,g$ we then have
\[ \begin{array}{rcl}
f(0) &=& 02102021021020210, \\
f(1) &=& 021020210210202102021, \\
f(2) &=& 20210210202102102, \\
g(0) &=& 0210202102102, \\
g(1) &=& 021020210210202102021, \\
g(2) &=& 021020210210202102102. \end{array}\]
Now indeed Theorem \ref{thmmain} applies by choosing $u_0 = v_0 = 02$ and $u_1 = v_1 = 1$, as is found by our tool, hence proving simultaneously
$\tau(f^n(02)) = \rho(g^n(02))$ and $\tau(f^n(1)) = \rho(g^n(1))$ for all $n$.

Two shorter separate proofs of $\tau(f^\infty(0)) = \fib$ and $\tau(g^\infty(0)) = \fib$ are possible, but the above proof may be given fully automatically without knowing that both  $\tau(f^\infty(0))$ and $\tau(g^\infty(0))$ are equal to $\fib$.

\section{The tool}
\label{sectool}
We developed a tool for proving equality of morphic sequences fully automatically, based on Theorem \ref{thmmain}. As input the definitions of $f$, $\tau$, $g$ and $\rho$ are entered in order to prove $\tau(f^\infty(0)) = \tau(g^\infty(0))$.

First the dominant eigenvalues of $f$ and $g$ are approximated by means of the {\em  power method}: the dominant eigenvalue of $f$ is approximated by $\frac{|f^{n+1}(0)|}{|f^n(0)|}$ for some number $n$ big enough, and similarly for $g$. In our tool we choose $n = 8$. If these approximations are close to each other, then the approach of Theorem \ref{thmmain} is tried directly, otherwise $f$ and/or $g$ are replaced by its square or cube until the approximated dominant eigenvalues are close. If this fails, then the attempt is given up. Of course also trying higher powers than only squares or cubes would be possible too, but as this will give rise to very long strings $f^n(a)$ or $g^n(a)$, we do not do this.

The next challenge is to choose $u_i, v_i, w_i$ for which the conditions of Theorem \ref{thmmain} hold. For every $u_i, v_i$ the property
$\tau(f^n(u_i) = \rho(g^n(v_i))$ should hold for every $n$. So every pair $(u_i,v_i)$ should be {\em safe}, where a pair $(u,v)$ of words is called safe if $|u| = |v|$ and $|f(u)| = |g(v)|$. We start by choosing $u_0$ and $v_0$ to be the smallest non-empty prefixes of $f^\infty(0)$ and $g^\infty(0)$, respectively, such that $(u_0,v_0)$ is safe. This is done by checking whether the prefixes of length $1,2,3,\ldots$ give rise to safe pair until a safe pair is found. If this is not found for a length $\leq 10$, then the attempt is given up.

After this safe pair $(u_0,v_0)$ has been found, a next requirement is $f(u_0) = u_{w_{0,0}} u_{w_{0,1}} \cdots u_{w_{0,k_0-1}}$, and
$g(v_0) = v_{w_{0,0}} v_{w_{0,1}} \cdots v_{w_{0,k_0-1}}$. We choose $w_{0,0} =0$: $f(u_0)$ is a prefix of $f^\infty(0)$ and $g(v_0)$ is a prefix of $g^\infty(0)$ of the same length. So $(u_{w_{0,1}},v_{w_{0,1}})$ should be a safe pair, where $u_{w_{0,1}}$ starts in $f(u_0)$ just after $u_{w_{0,0}}$ and $v_{w_{0,1}}$ starts in $g(v_0)$ just after $v_{w_{0,0}}$. We search for the smallest safe pair starting on this position. If it $(u_0,v_0)$ again, we choose $w_{0,1} = 0$, otherwise we define it to be $(u_1,v_1)$ and choose $w_{0,1} = 1$. This process continues: after this position we search in $f(u_0)$ and $g(v_0)$ for the next safe pair. If this coincides with $(u_i,v_i)$ that was already defined, then this one is chosen, otherwise a new $(u_i,v_i)$ is defined. After $f(u_0)$ and $g(v_0)$ have been split up into safe pairs $(u_i,v_i)$ in this way, fully defining $w_0$ and meeting the conditions $f(u_0) = u_{w_{0,0}} u_{w_{0,1}} \cdots u_{w_{0,k_0-1}}$ and $g(v_0) = v_{w_{0,0}} v_{w_{0,1}} \cdots v_{w_{0,k_0-1}}$, the process continues in the same way for $f(u_1)$ and $g(v_1)$, starting at the front. And next the same is done for all remaining $f(u_i)$ and $g(v_i)$ for which $(u_i,v_i)$ have been defined. If in this way at some point no new pairs $(u_i,v_i)$ are created, and all these pairs $(u_i,v_i)$ give rise to
$f(u_i) = u_{w_{i,0}} u_{w_{i,1}} \cdots u_{w_{i,k_i-1}}$ and $g(v_i) = v_{w_{i,0}} v_{w_{i,1}} \cdots v_{w_{i,k_i-1}}$, then all conditions of Theorem \ref{thmmain} are met as long as $\tau(u_i) = \rho(v_i)$ for all $i$, which is checked by the tool too.

If this all succeeds, then the desired equality $\tau(f^\infty(0)) = \rho(g^\infty(0))$ is concluded by Theorem \ref{thmmain}. However,
the output of the tool does not refer to Theorem \ref{thmmain}, but instantiates the proof of Theorem \ref{thmmain} by the words $u_i,v_i,w_i$ found by the above construction. In this way the generated proof only makes use of the given definition and the principle of induction. In fact Example \ref{ex1} as given in the introduction shows the literal output of the tool.

The tool was written in Freepascal that is freely available as being open source, and may be run on several platforms. The input consists of a direct encoding of $f, \tau, g, \rho$. More precisely, the first line consists of $n_f$, being the size of the signature of $f$. The next $n_f$ lines contain the $n_f$ words $f(0), f(1),\ldots, f(n_f-1)$. The next line encodes $\tau$ by the word $\tau(0) \tau(1) \cdots \tau(n_f-1)$ of length $n_f$. Next exactly the same is done for $g, \rho$. So the input for Example \ref{ex1} reads

\begin{verbatim}
2
01
0
01
3
02
021
102
001
\end{verbatim}

The source code of the tool to prove equality of morphic sequences is stored in the file {\tt eq.txt}. Apart from this standard tool also two variants are available. One has source code {\tt eql.txt} and does exactly the same but produces its result in LaTeX format.
Applying this variant on the above input exactly gives the proof as presented in Example \ref{ex1}. The other variant is a basic version that after some possible upscaling only tries to apply Theorem \ref{thmmain2}. The source code of this basic version is available as {\tt eqb.txt}. For many examples both the standard version and the basic version give a correct proof, but often the proof given by the standard version is simpler.

After compiling (by {\tt fpc eq.txt}) the executable file {\tt eq} is available. If a file {\tt inp.txt} starts by the input in the format as given above, then by
calling {\tt eq < inp.txt} our tool is applied on the corresponding input, and produces the resulting proof if possible. For the variants
{\tt eql.txt} and {\tt eqb.txt} this works similarly.

For over 20 particular examples, including Examples 1 to 6 in this paper (named {\tt ex1.txt} until {\tt ex6.txt}) and several representations for $\even(\fib)$ and $\odd(\fib)$, we provide a file containing:
\begin{itemize}
\item the example in the input format as given above, by which the tool may be applied directly on this file,
\item the output of the standard tool applied on the example if it is successful, and
\item the output of the basic tool applied on the example if it is successful.
\end{itemize}

The source code of the programs, their Windows executables and all example files are available at:

{\tt https://github.com/hzantema/Equality-of-morphic-sequences}

\section{Limitations}
\label{seclim}

In order to prove $\tau(f^\infty(0) = \rho(g^\infty(0)$ for any morphic sequences $\tau(f^\infty(0)$ and $\rho(g^\infty(0)$ our approach is first scale up $f$ and $g$ until they have the same dominant eigenvalue, and then apply Theorem \ref{thmmain}. In this section we give a few examples showing why this approach may fail for different reasons.

A first reason may be that upscaling until $f$ and $g$ have the same dominant eigenvalue is not possible. According to Theorem \ref{thmd} this may only occur if either $f$ or $g$ is not primitive or the sequence $\tau(f^\infty(0) = \rho(g^\infty(0)$ is periodic.

\examp
Let $f(0) = 01, f(1) = 11, g(0) = 01, g(1) = 111$ and let $\tau = \rho$ be the identity. Then $f^\infty(0) = 0 1^\infty = g^\infty(0)$.
However, the dominant eigenvalues of $f$ and $g$ are 2 and 3, respectively, for which upscaling until the domimant eigenvalues are equal is not possible. So our approach will fail. Note that neither $f$ nor $g$ is primitive.

\vspace{2mm}

We are not aware of examples that are not ultimately periodic and non-primitive for which upscaling fails. But the next example is
not ultimately periodic and non-primitive and we give an argument why Theorem \ref{thmmain} does not apply.

\examp
Let $f(0) = 01, f(1) = 21, f(2) = 2, g(0) = 012, g(1) = 12, g(2) = 2$ and let $\tau = \rho$ be the identity. Then
\[f^\infty(0) = 0 1212212^312^41\cdots = g^\infty(0).\]
Both $f$ and $g$ have dominant eigenvalue 1.
However, Theorem \ref{thmmain} does not apply, since if it does then for every $n \geq 0$ we obtain $f^n(u_0) = g^n(v_0)$ for strings $u_0,v_0$ starting in 0. For $n=0$ this implies $u_0 = v_0$. Since $|f(0)| < |g(0)|, \; |f(1)| = |g(1)|, \; |f(2)| = |g(2)|$ and $u_0 = v_0$ contains a symbol 0, this implies  $|f(u_0)| < |g(v_0)|$, contradiction.

\section{Subsequences of $\fib$}
\label{secfib}

As mentioned in the introduction, the starting point of this research was finding a small representation for
\[\even(\fib) \; = \; \fib(0) \fib(2) \fib(4) \fib(6) \cdots \; = \]
\[ 00110011000100010001100110001000100\cdots \]
As $\fib = \tau(f^\infty(0))$ for $|f(a)| \leq 2$ for all symbols $a$, we wondered whether such a representation is also possible for $\even(\fib)$, for a probably larger alphabet. For $n=2,3,4,\ldots$ we applied brute force computer search to find $\tau,f$ on an alphabet of $n$ elements for which all strings $f(a)$ have length 1 or 2 and for which the first 40 elements of $\tau(f^\infty(0))$ coincide with the first 40 elements of $\even(\fib)$. For $n=2,3,4$ this gave no solution, but for $n=5$ the following solution was found:
\[ f_e(0) = 01, f_e(1) = 2, f_e(2) = 31, f_e(3) = 04, f_e(4) = 0,\]
\[ \tau_e(0) = \tau_e(1) = 0, \tau_e(2) = \tau_e(3) = \tau_e(4) = 1.\]
It was easily checked that not only $\tau_e(f_e^\infty(0))$ and $\even(\fib)$ coincide at the first 40 positions, but also at the first one million positions. So this strongly suggests that $\tau_e(f_e^\infty(0)) = \even(\fib)$, but the challenge now was to prove this, preferably elementary. This was formulated as a challenge both in \cite{BD24} and in the extended version of \cite{Z24a}. Jeffrey Shallit informed us that this could be proved by his tool Walnut. Nevertheless, we still were looking for a general approach that applies for arbitrary morphic sequences.

A crucial insight was the following encoding of $\even(\fib)$ as a morphic sequence proposed by Henk Don that is correct by construction.
Generalized to arbitrary arithmetic subsequences of $\fib$ this is described in detail in \cite{BD24}. Restricted to $\even(\fib)$ and
$\odd(\fib)$ defined by
\[\odd(\fib) \; = \; \fib(1) \fib(3) \fib(5) \fib(7) \cdots \]
the encoding is as follows. By upscaling a factor 3, we obtain $\fib = f^\infty(0)$ for $f$ defined by $f(0) = 01001, f(1) = 010$. The key property is that now both $f(0)$ and $f(1)$ have odd length, by which $f(u)$ has always even length for $u$ of length 2. In particular:
\[ f(01) \; = \; 01 \; 00 \; 10 \; 10, \]
\[ f(00) \; = \; 01 \; 00 \; 10 \; 10 \; 01,\]
\[ f(10) \; = \; 01 \; 00 \; 10 \; 01. \]
So by decomposing $\fib$ into words of length 2, and renaming $01$ to $a$, $00$ to $b$ and $10$ to $c$, the sequence $\fib$ is also obtained as $g^\infty(a)$ for
\[ g(a) = abcc, \; g(b) = abcca, \; g(c) = abca.\]
The sequence $\even(\fib)$ is obtained by replacing every word of length 2 in this decomposition by its first symbol, so $\rho(a) = \rho(b) = 0, \rho(c) = 1$.
So by construction we have $\even(\fib) = \rho(g^\infty(a))$, and similarly $\odd(\fib) = \tau(g^\infty(a))$ for $\tau(a) = 1$, $\tau(b) = \tau(c) = 0$.

As a consequence, for proving $\tau_e(f_e^\infty(0)) = \even(\fib)$ it suffices to prove $\tau_e(f_e^\infty(0)) = \rho(g^\infty(a))$, and for this the approach of this paper applies. In fact first we found a proof based on Theorem \ref{thmmain2}, but later on this was simplified to the following proof that is fully automatically produced by our tool (in which $a,b,c$ are renamed to $0,1,2$ and the subscript $e$ is omitted):

\[f(0) = 01,
f(1) = 2,
f(2) = 31,
f(3) = 04,
f(4) = 0,\]
\[\tau(0) = 0,
\tau(1) = 0,
\tau(2) = 1,
\tau(3) = 1,
\tau(4) = 1,\]
\[g(0) = 0122,
g(1) = 01220,
g(2) = 0120,
\rho(0) = 0,
\rho(1) = 0,
\rho(2) = 1.\]
Replace $f$ by $f^3$:
\[f(0) = 01231,
f(1) = 042,
f(2) = 01031,
f(3) = 01201,
f(4) = 012.\]
Claim to be proved: $\tau(f^\infty(0)) = \rho(g^\infty(0))$.

\noindent We will prove the following 5 properties simultaneously by induction on $n$.

$(0) \; \tau(f^n(012)) = \rho(g^n(012)).$

$(1) \; \tau(f^n(31)) = \rho(g^n(20)).$

$(2) \; \tau(f^n(0)) = \rho(g^n(1)).$

$(3) \; \tau(f^n(42)) = \rho(g^n(22)).$

$(4) \; \tau(f^n(01)) = \rho(g^n(00)).$

\noindent Then our claim follows from $(0)$.

\noindent Basis n=0 of induction:
\[\tau(f^0(012)) = 001 = \rho(g^0(012)),\;\;
\tau(f^0(31)) = 10 = \rho(g^0(20)),\]
\[\tau(f^0(0)) = 0 = \rho(g^0(1)), \;\;
\tau(f^0(42)) = 11 = \rho(g^0(22)),\]
\[\tau(f^0(01)) = 00 = \rho(g^0(00)).\]
\noindent Basis of induction proved.

\noindent Induction step part (0):
\[\tau(f^{n+1}(012)) = \tau(f^n(f(012))) = \tau(f^n(0123104201031)) = \]
\[\tau(f^n(012)) \tau(f^n(31)) \tau(f^n(0)) \tau(f^n(42)) \tau(f^n(01)) \tau(f^n(0)) \tau(f^n(31)) =  \]
\[ \mbox{ (by induction hypothesis)}\]
\[\rho(g^n(012)) \rho(g^n(20)) \rho(g^n(1)) \rho(g^n(22)) \rho(g^n(00)) \rho(g^n(1)) \rho(g^n(20)) = \]
\[\rho(g^n(0122012200120)) =
\rho(g^n(g(012))) =
\rho(g^{n+1}(012)).\]
Induction step part (1):
\[\tau(f^{n+1}(31)) =
\tau(f^n(f(31))) =
\tau(f^n(01201042)) =\]
\[\tau(f^n(012)) \tau(f^n(01)) \tau(f^n(0)) \tau(f^n(42)) =  \; \mbox{ (by induction hypothesis)}\]
\[\rho(g^n(012)) \rho(g^n(00)) \rho(g^n(1)) \rho(g^n(22)) =\]
\[\rho(g^n(01200122)) =
\rho(g^n(g(20))) =
\rho(g^{n+1}(20)).\]
Induction step part (2):
\[\tau(f^{n+1}(0)) =
\tau(f^n(f(0))) =
\tau(f^n(01231)) =
\tau(f^n(012)) \tau(f^n(31)) =  \]
\[ \mbox{ (by induction hypothesis)}\]
\[\rho(g^n(012)) \rho(g^n(20)) =
\rho(g^n(01220)) =
\rho(g^n(g(1))) =
\rho(g^{n+1}(1)).\]
Induction step part (3):
\[\tau(f^{n+1}(42)) =
\tau(f^n(f(42))) =
\tau(f^n(01201031)) =\]
\[\tau(f^n(012)) \tau(f^n(01)) \tau(f^n(0)) \tau(f^n(31)) =  \; \mbox{ (by induction hypothesis)}\]
\[\rho(g^n(012)) \rho(g^n(00)) \rho(g^n(1)) \rho(g^n(20)) =\]
\[\rho(g^n(01200120)) =
\rho(g^n(g(22))) =
\rho(g^{n+1}(22)).\]
Induction step part (4):
\[\tau(f^{n+1}(01)) =
\tau(f^n(f(01))) =
\tau(f^n(01231042)) = \]
\[\tau(f^n(012)) \tau(f^n(31)) \tau(f^n(0)) \tau(f^n(42)) =  \; \mbox{ (by induction hypothesis)}\]
\[\rho(g^n(012)) \rho(g^n(20)) \rho(g^n(1)) \rho(g^n(22)) =\]
\[\rho(g^n(01220122)) =
\rho(g^n(g(00))) =
\rho(g^{n+1}(00)).\]
Induction step proved, hence claim proved.

\vspace{3mm}

This concludes the generated proof, hence proving that indeed $\tau_e(f_e^\infty(0)) = \even(\fib)$. A similar proof 
for a slightly larger encoding of both $\even(\fib)$ and $\odd(\fib)$ was given in \cite{AS23}.

It is a natural question whether
$\tau_e(f_e^\infty(0))$ is the smallest possible representation of $\even(\fib)$. For making this question more precise, we have to precisely define what is meant by {\em smallest}. As in \cite{BD24} we define {\em smallest} to be of minimal complexity, where the complexity of a representation $\tau(f^\infty)$ of a morphic sequence is defined to be the sum of $|f(a)|$ where $a$ runs over the alphabet. So the complexity of the standard representation of $\fib$ is 3, and the complexity of the representation $\tau_e(f_e^\infty(0))$ of $\even(\fib)$ is 8. Note that this complexity does not coincide with the number of states in the corresponding automaton as described in \cite{Z24a,R14b}, but with the number of transitions $=$ arrows.

In \cite{BD24} it was conjectured that the minimal complexity of $\even(\fib)$ is 8, realized by $\tau_e,f_e$, and the minimal complexity of $\odd(\fib)$ is 10. Later on, Wieb Bosma found a representation of $\odd(\fib)$ of complexity 9, namely $f_o, \tau_o$ defined by
\[ f_o(0) = 01, f_o(1) = 51, f_o(2) = 30, f_o(3) = 4, f_o(4) = 3, f_o(5) = 2,\]
\[\tau_o(0) = \tau_o(2) = \tau_o(4) = 11 \;  \tau_o(1) = \tau_o(3) = \tau_o(5) = 0.\]
This was found by brute force search, and it was checked that the first several million elements of $\odd(\fib)$ and $\tau_o(f_o^\infty(0))$ coincide. Proving that $\odd(\fib) = \tau_o(f_o^\infty(0))$ was left as a challenge for our approach, and indeed it succeeds. As observed earlier, $\odd(\fib) = \rho(g^\infty(0))$ for $g, \rho$ defined by
\[g(0) = 0122,
g(1) = 01220,
g(2) = 0120,
\rho(0) = 1,
\rho(1) = 0,
\rho(2) = 0.\]
So we have to prove that $\tau_o(f_o^\infty(0)) = \rho(g^\infty(0))$, and after omitting the subscripts our tool immediately finds a proof, similar to our proof of $\tau_e(f_e^\infty(0)) = \even(\fib)$. As a first step again $f$ is replaced by $f^3$, now yielding
$f(0) = 0151251, f(1) = 30251, f(2) = 30151, f(3) = 4, f(4) = 3, f(5) = 401$. Next for proving $\tau(f^\infty(0)) = \rho(g^\infty(0))$
the following 5 properties are proved simultaneously by induction on $n$:

(0) $\tau(f^n(01512513)) = \rho(g^n(01220122))$.

(1) $\tau(f^n(02514)) = \rho(g^n(00120))$.

(2) $\tau(f^n(013)) = \rho(g^n(012))$.

(3) $\tau(f^n(02513)) = \rho(g^n(00122))$.

(4) $\tau(f^n(01514)) = \rho(g^n(01220))$.

As the rest is only basic bookkeeping as in our proof of $\tau_e(f_e^\infty(0)) = \even(\fib)$, further details are omitted here, they may be found as {\tt oddfib.txt} in the repository

{\tt https://github.com/hzantema/Equality-of-morphic-sequences}.

A more thorough brute force analysis by Wieb Bosma showed that $f_o,\tau_o$ is indeed the smallest representation of $\odd(\fib)$ of complexity 9, and it is the only one of complexity 9. He also showed that $f_e,\tau_e$ is indeed the smallest representation of $\even(\fib)$ of complexity 8, but this is not the only one. There are exactly two of complexity 8, and the other one is $\tau(f^\infty(0))$ for $\tau, f$ defined by
\[ f(0) = 01, f(1) = 2, f(2) = 34, f(3) = 0, f(4) = 32,\]
\[ \tau(0) =  \tau(1) =  \tau(4) = 0,  \tau(2) =  \tau(3) = 1.\]
Correctness of this latter representation is proved in a similar way by our tool, two proofs are found in the repository.
The repository also contains proofs
{\tt oddfib2.txt} and {\tt oddfib3.txt} for other representations of $\odd(\fib)$, and proofs {\tt evfib.txt}, {\tt evfib2.txt}, {\tt evfib3.txt} and {\tt evfib4.txt} for $\even(\fib)$, the first one coinciding with the proof we gave and the last two dealing with the above mentioned alternative representation of complexity 8.

\section{Concluding remarks}
\label{secconcl}

The class of morphic sequences is closed under several operations, like applying morphisms or taking arithmetic subsequences. So applying such operations on some simple morphic sequence yields a morphic sequence again, but often it is unclear how to represent it efficiently in the standard way by a morphism and a coding. The starting point of this paper was to investigate this for some very simple instances: apply $\even$ or $\odd$ on the Fibonacci sequence $\fib$. Finding some representation that is correct for the first $N$ elements for some fixed number $N$ is feasible by applying brute force computing. Checking that a resulting representation is correct for the first $N$ elements for some much bigger number $N$ is also feasible by a simple computer program. But then it remains to prove that it is correct for all elements rather than only an initial part. That may be solved as soon as we have a technique to formally prove that two given distinct representations represent the same morphic sequence. This is the main topic of this paper. The main theorem is Theorem \ref{thmmain}, by which equality is concluded if some words may be chosen satisfying a range of technical conditions. A tool was developed that tries to choose the desired words in a systematic way. It turned out that in this way proving correctness of simple representations for $\even(\fib)$ and $\odd(\fib)$ could be done fully automatically. Apart from these particular examples we gave several more examples for which our approach applies fully automatically, all available in a public repository.

Apart from the evidence of these examples and a few arguments why our approach fails in some particular cases given in Section \ref{seclim}, we have a quite limited understanding of the power of our approach. Even the very basic question of decidability of the problem to establish whether two representations give the same morphic sequence seems to be open for the general case. This problem is also called HD0L $\omega$-equivalence, and has been solved for primitive morphisms in \cite{D12}.

A next challenge would be to apply the same approach to find small representations for other arithmetic subsequences of $\fib$, like the sequence obtained by only keeping the elements on positions divisible by 3, or other operations on $\fib$ like applying a morphism. And all this could be done for other morphic sequences than only for $\fib$. Apart from finding morphic sequences of low complexity that are correct for the first $N$ elements by brute force computing, then a main challenge is to find reasonable small representations that are correct by construction.

A main issue is finding a smallest representation for a particular sequence, Here smallest refers to lowest complexity. A similar complexity analysis for operations on the more restricted class of automatic sequences was given in \cite{ZB21}.

{\bf Acknowledgement:} We want to thank Jean-Paul Allouche, Wieb Bosma, Henk Don, Fabien Durand and Jeffrey Shallit for fruitful suggestions and discussions.

\bibliography{reftrs}

\end{document}